\newenvironment{customthm}[1]
  {\innercustomthm}
  {\endinnercustomthm}
\newtheorem{theorem}{Theorem}
\newtheorem{lemma}[theorem]{Lemma}
\newcommand{\imi}{\mathsf{i}}
\newcommand{\imj}{\mathsf{j}}
\newcommand{\imk}{\mathsf{k}}
\begin{document}
\title{A quaternion-based approach to construct quaternary periodic complementary pairs}
\author{{\IEEEauthorblockN{Nitin Jonathan Myers, {\it Student Member, IEEE}, and 
Robert W. Heath Jr., {\it Fellow, IEEE}. }}
\thanks{N. J. Myers (nitinjmyers@utexas.edu) and R. W. Heath Jr. (rheath@utexas.edu) are with the  Wireless Networking and Communications  Group, The University of Texas at Austin, Austin, TX 78712 USA. This material is based upon work supported by the National Science Foundation under Grant numbers NSF-CNS-1731658 and NSF ECCS-1711702. }
}
\maketitle

\begin{abstract}
Two arrays form a periodic complementary pair if the sum of their periodic autocorrelations is a delta function. Finding such pairs, however, is challenging for large arrays whose entries are constrained to a small alphabet. One such alphabet is the quaternary set which contains the complex fourth roots of unity. In this paper, we propose a technique to construct periodic complementary pairs defined over the quaternary set using perfect quaternion arrays. We show how new pairs of quaternary sequences, matrices, and four-dimensional arrays that satisfy a periodic complementary property can be constructed with our method. 
\end{abstract}
\begin{IEEEkeywords} 
Quaternary arrays, quaternions, perfect arrays, periodic complementary pairs, beamforming, mm-Wave
\end{IEEEkeywords}

\section{Introduction}
A periodic complementary pair (PCP) is a collection of two arrays whose periodic autocorrelation sums up to a delta function. An array can refer to a sequence, matrix, or a tensor. For example, a sequence or vector of length $M$ is a one-dimensional array of size $M$, while an $M \times N$ matrix is a two-dimensional array of size $M \times N$. PCPs are arrays with special periodic autocorrelation properties that find applications in coded aperture imaging \cite{codedapert}, communications \cite{commun_seq}, and radar \cite{radar_app}. PCPs are different than other complementary sequence constructions like Golay pairs \cite{golay_seq}, where the notion of complementary applies to aperiodic correlations. 
\par Prior work has considered the design of PCPs over a binary alphabet \cite{pcp_bomer,pcp_dokovic,pcp_twoval}. The constructions in \cite{pcp_bomer,pcp_dokovic,pcp_twoval} are for sequences, i.e., $1\mathrm{D}$ arrays, over $\{1,-1\}$ or $\{1,e^{\mathsf{i \theta}}\}$ where $\mathsf{i}=\sqrt{-1}$ is the standard unit imaginary number. A class of two-dimensional arrays in $\{-1,1\}$ which are PCPs was derived in \cite{pcp_2D}. PCPs over binary alphabets, however, do not exist for every array size. For example, it was shown in \cite{no_exist_18} that binary PCP sequences of length $18$ do not exist. Relaxing the binary constraint on the alphabet to a quaternary one provides additional flexibility to design new PCPs.
\par Sequences over the quaternary alphabet $\mathcal{C}=\{1, \mathsf{i},-1, -\mathsf{i} \}$ that form PCPs were proposed in \cite{qpcp_1,qpcp_2,qpcp_3}. The quaternary PCPs in \cite{qpcp_1,qpcp_2,qpcp_3} were generated using the Gray method, inverse Gray method with interleaving, or the product method. An extensive survey on the length of sequences for which a quaternary PCP exists can be found in \cite{qpcp_1}. The literature on quaternary PCPs, however, is limited to sequences. The existence and construction of two-dimensional or multi-dimensional quaternary PCPs has not been studied to the best of our knowledge. One application of $2\mathrm{D}$ quaternary PCPs is in dual polarized planar antenna arrays \cite{petersson2019power} equipped with two-bit phase shifters. Quaternary PCPs when applied to such systems result in quasi-omnidirectional beams which are useful for initial access in millimeter wave systems. 
\par In this paper, we construct new one-, two-, and four-dimensional quaternary PCPs by leveraging perfect quaternion arrays \cite{pqa_1}. It is important to note that the entries of the PCPs constructed in this paper contain elements in $\mathcal{C}$ which are complex numbers. The construction, however, is derived using quaternion algebra which is different from standard algebra over complex numbers \cite{q_algebra_1}. In Section \ref{sec:connection}, we exploit the fact that the right periodic autocorrelation of a perfect quaternion array (PQA) is a delta function to construct PCPs. Our construction allows decomposing any PQA into a PCP. We use this construction in Section \ref{sec:construct} to show that PQAs over the basic unit quaternions \cite{pqa_1,pqa_2, pqa_3, pqa_4} can be mapped to quaternary PCPs of the same size. In Section \ref{sec:leftperiodic}, we construct a different class of PCPs by leveraging the properties of the Fourier transform and the left periodic autocorrelation of PQAs. A Matlab implementation of the quaternary PCPs derived in this paper is available on our GitHub page \cite{PCP_git}. 
\par \par Notation: $\mathbf{A}$ is a matrix, $\mathbf{a}$ is a column vector and $a, A$ denote scalars. $\mathbf{A}^{\ast}$ and $a^{\ast}$ denote the complex conjugate of $\mathbf{A}$ and $a$. We use $\mathbf{1}$ to denote an all-ones matrix. $A\left(k,\ell\right)$ denotes the entry of $\mathbf{A}$ in the $k^{\mathrm{th}}$ row and the ${\ell}^{\mathrm{th}}$ column. The indices $k$ and $\ell$ start from $0$. $\Vert \mathbf{A} \Vert _{\mathrm{F}}$ is the Frobenius norm of $\mathbf{A}$. $\mathbb{R}$, $\mathbb{C}$, and $\mathbb{Q}$ denote the set of real, complex, and quaternion numbers. $\langle \cdot \rangle_N$ denotes the modulo $N$ operation. We define the flipped version of an $M\times N$ matrix $\mathbf{A}$ as $\mathbf{A}_{\mathrm{flip}}$ where $A_{\mathrm{flip}}(m,n)=A(\langle-m\rangle_M,\langle-n\rangle_N)$.
\section{Preliminaries}
%\par In this section, we explain the basics of quaternion algebra and provide definitions related to periodic complementary pairs. 
\par Quaternions are a generalization of complex numbers which have one real component and three imaginary components \cite{q_algebra_1}. We define $\imi$, $\imj$, and $\imk$ as the fundamental quaternion units. These units satisfy
\begin{align}
\nonumber
&\imi^2=-1,&& \imj^2=-1, && \imk^2=-1,\\
\nonumber
&\imi \imj =\imk, && \imj \imk =\imi, && \imk \imi =\imj, \\
\label{eq:prop_quatn}
&\imj \imi  =-\imk, && \imk \imj =-\imi, && \imi \imk  =-\imj.
\end{align}
Any quaternion $q \in \mathbb{Q}$ can be expressed as \cite{q_algebra_1}
\begin{equation}
q=q_1+q_2 \imi + q_3 \imj +q_4 \imk, 
\end{equation}
where $q_1, \, q_2, \, q_3, \,q_4 \in \mathbb{R}$. Complex numbers are a special instance of quaternions, i.e., $\mathbb{C}=\{q\in \mathbb{Q}:q_3=0, q_4=0\}$. We use the properties in \eqref{eq:prop_quatn} and the distributive law to express the quaternion $q$ in terms of two complex numbers as  
\begin{equation}
\label{eq:quat_split}
q=\underbrace{(q_1+q_2 \imi)}_{q_{\mathrm{h}}} + \underbrace{(q_3 +q_4 \imi)}_{q_{\mathrm{v}}} \imj.
\end{equation} 
The quaternion $q$ can be written as $q=q_{\mathrm{h}}+q_{\mathrm{v}} \imj$.
\par We now discuss basic operations over quaternions. The product of two quaternions $p$ and $q$ is \cite{q_algebra_1}
\begin{align}
\nonumber
p q &=(p_1q_1-p_2q_2-p_3q_3-p_4q_4)\\
\nonumber
&+ (q_1p_2 + p_1q_2 - q_3p_4 + p_3q_4)\imi\\
\nonumber
&+ (q_1p_3 + p_1q_3 + q_2p_4 - p_2q_4)\imj \\
\label{eq:prod_quatn}
&+ (q_1p_4 + p_1q_4 - q_2p_3 + p_2q_3)\imk.
\end{align}
Multiplication over quaternions is non-commutative. For example, it can be observed from \eqref{eq:prop_quatn} that $\imi \imj \neq \imj \imi $. It is important to note, however, that $\alpha q=q \alpha$ for $ \alpha \in \mathbb{R}$ and $q \in \mathbb{Q}$. The complex conjugate of $q$ is \cite{q_algebra_1} 
\begin{equation}
\label{eq:conj_quatn}
q^{\ast}=q_1-q_2 \imi - q_3 \imj -q_4 \imk.
\end{equation}
The conjugates corresponding to the product and sum of two quaternions can be expressed as
\begin{align}
\label{eq:prod_conj}
(pq)^{\ast}&=q^{\ast}p^{\ast} \, \mathrm{and}\\
\label{eq:sum_conj}
(p+q)^{\ast}&=p^{\ast}+q^{\ast}.
\end{align}
The properties in \eqref{eq:prod_quatn}$-$\eqref{eq:sum_conj} naturally extend to matrices.   
\par Now, we define matrices over quaternions and the periodic autocorrelation of a quaternion matrix. Consider a quaternion matrix $\mathbf{A}\in \mathbb{Q}^{M \times N}$. Similar to the representation in \eqref{eq:quat_split}, we use $\mathbf{A}_{\mathrm{h}} \in \mathbb{C}^{M \times N}$ and $\mathbf{A}_{\mathrm{v}} \in \mathbb{C}^{M \times N}$ to denote the complex components of $\mathbf{A}$ such that 
\begin{equation}
\label{eq:quat_mat_dec}
\mathbf{A}= \mathbf{A}_{\mathrm{h}}+\mathbf{A}_{\mathrm{v}}\imj.
\end{equation}
The non-commutative nature of quaternion multiplication leads to a different left and right periodic correlation \cite{pqa_2}. In this paper, we focus on the right periodic correlation to derive PCPs. We also briefly derive a different class of PCPs using the left periodic correlation in Section \ref{sec:leftperiodic}. For two matrices $\mathbf{X}, \mathbf{Y} \in \mathbb{Q}^{M \times N}$, we define the conjugate-free periodic cross correlation as the matrix $\mathbf{X}\star \mathbf{Y} \in \mathbb{Q}^{M \times N}$. The $(m,n)^{\mathrm{th}}$ entry of $\mathbf{X}\star \mathbf{Y}$ is
\begin{equation}
\label{eq:stardefn}
(\mathbf{X}\star \mathbf{Y})_{m,n}=\sum^{M-1}_{k=0} \sum^{N-1}_{\ell=0} X(k,\ell)Y(\langle k+m \rangle_M,\langle \ell+n \rangle _N).
\end{equation}
We use $\mathbf{R}_{\mathbf{A}}$ to denote the 2D-periodic autocorrelation of $\mathbf{A}$. The $(m,n)^{\mathrm{th}}$ entry of $\mathbf{R}_{\mathbf{A}} \in \mathbb{Q}^{M \times N}$ is \cite{pqa_2}
\begin{equation}
\label{eq:ACR_quatn}
R_{\mathbf{A}}(m,n)=\sum^{M-1}_{k=0} \sum^{N-1}_{\ell=0} A(k,\ell)A^{\ast}(\langle k+m \rangle_M,\langle \ell+n \rangle _N).
\end{equation}
It can be observed that $\mathbf{R}_{\mathbf{A}}=\mathbf{A} \star \mathbf{A}^{\ast}$. For the special case when $\mathbf{A}$ is a complex matrix, i.e., when $\mathbf{A}$ does not have $\imj$ and $\imk$ components,  the right periodic autocorrelation $\mathbf{R}_{\mathbf{A}}$ is the common 2D-periodic autocorrelation of $\mathbf{A}$.
\par We now define periodic complementary matrix pairs over the complex numbers. We use $\boldsymbol{\delta}$ to denote a unit delta matrix of size $M \times N$, i.e., $\delta(0,0)=1$ and $\delta(m,n)=0\, \forall \, (m, n) \neq (0,0)$. Two matrices $\mathbf{X} \in \mathbb{C}^{M \times N}$ and $\mathbf{Y}  \in \mathbb{C}^{M \times N}$ form a PCP if \cite{pcp_2D} 
\begin{equation}
\label{eq:PCP_cmp}
R_{\mathbf{X}}(m,n)+R_{\mathbf{Y}}(m,n)=0\,\, \forall (m,n)\neq (0,0).
\end{equation} 
Equivalently, $\mathbf{R}_{\mathbf{X}}+\mathbf{R}_{\mathbf{Y}}=2MN \boldsymbol{\delta}$ for a PCP with $\Vert \mathbf{X}\Vert_{\mathrm{F}}=\sqrt{MN}$ and $\Vert \mathbf{Y}\Vert_{\mathrm{F}}=\sqrt{MN}$. A trivial PCP is $\mathbf{X}=\sqrt{MN} \boldsymbol{\delta}$ and $\mathbf{Y}=\sqrt{MN} \boldsymbol{\delta}$. Finding PCPs with entries in $\{1, -1\}$ or $\mathcal{C}$, however, is challenging when the size of the matrices is large. In this paper, we propose a technique to construct new PCPs whose entries are in $\mathcal{C}$. 
\section{Connection between perfect quaternion arrays and complex periodic complementary pairs} \label{sec:connection}
\par In this section, we show that every perfect quaternion array (PQA) can be decomposed into a PCP with complex entries in $\mathbb{C}$. A matrix $\mathbf{A} \in \mathbb{Q}^{M\times N}$ is a PQA if its right periodic autocorrelation is a delta function, i.e., 
\begin{equation}
\label{eq:PQA_def}
R_{\mathbf{A}}(m,n)=0\,\, \forall (m,n)\neq (0,0).
\end{equation}
This property can be expressed as $\mathbf{R}_{\mathbf{A}}=MN\boldsymbol{\delta}$ when $\Vert \mathbf{A}\Vert_{\mathrm{F}}=\sqrt{MN}$. An example of a $2 \times 2$ PQA is \cite{pqa_4}
\begin{equation}
\label{eq:example_PQA}
\mathbf{D}=\begin{pmatrix}
1 & \imi \\
\imj & \imk  
\end{pmatrix}.
\end{equation}
Quaternion matrices that are PQAs were investigated in \cite{pqa_1,pqa_2,pqa_3,pqa_4}. To the best of our knowledge, prior work has not studied the connection between PQAs and PCPs.
\par We first express the autocorrelation $\mathbf{R}_{\mathbf{A}}$ as a function of the complex matrices $\mathbf{A}_{\mathrm{h}}$ and $\mathbf{A}_{\mathrm{v}}$ in Lemma 1. The result in Lemma 1 is then used in Theorem \ref{th:main} to derive PCPs.
 \begin{lemma}
 \label{lem:decomp}
 For any quaternion matrix $\mathbf{A}$, 
 \begin{equation}
 \mathbf{R}_{\mathbf{A}}= \mathbf{R}_{\mathbf{A}_{\mathrm{h}}}+\mathbf{R}_{\mathbf{A}_{\mathrm{v}}}+ [\mathbf{A}_{\mathrm{v}}\star\mathbf{A}_{\mathrm{h}}- \mathbf{A}_{\mathrm{h}}\star\mathbf{A}_{\mathrm{v}}]\imj.
 \end{equation}
 \end{lemma}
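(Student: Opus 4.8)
The plan is to verify the identity entrywise by expanding the product $A(k,\ell)A^{\ast}(\langle k+m\rangle_M,\langle \ell+n\rangle_N)$ appearing in the definition \eqref{eq:ACR_quatn} of $\mathbf{R}_{\mathbf{A}}$, using the split $\mathbf{A}=\mathbf{A}_{\mathrm{h}}+\mathbf{A}_{\mathrm{v}}\imj$ from \eqref{eq:quat_mat_dec}. First I would put the conjugate in split form: since $\imi\imj=\imk$, applying \eqref{eq:conj_quatn} entrywise gives
\[
\mathbf{A}^{\ast}=\mathbf{A}_{\mathrm{h}}^{\ast}-\mathbf{A}_{\mathrm{v}}\imj,
\]
where $\mathbf{A}_{\mathrm{h}}^{\ast}$ now denotes the ordinary complex conjugate of the complex matrix $\mathbf{A}_{\mathrm{h}}$. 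This reduces the whole statement to a product of two split quaternions whose four complex components are entries of $\mathbf{A}_{\mathrm{h}}$ and $\mathbf{A}_{\mathrm{v}}$.

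The key algebraic fact, and the one place where the non-commutativity in \eqref{eq:prop_quatn} really matters, is the commutation rule $\imj z=z^{\ast}\imj$ valid for every complex number $z=x+y\imi$; it follows directly from $\imj\imi=-\imk=-\imi\imj$. With this rule, abbreviating $k'=\langle k+m\rangle_M$ and $\ell'=\langle \ell+n\rangle_N$, I would expand
\[
\bigl(A_{\mathrm{h}}(k,\ell)+A_{\mathrm{v}}(k,\ell)\imj\bigr)\bigl(A_{\mathrm{h}}^{\ast}(k',\ell')-A_{\mathrm{v}}(k',\ell')\imj\bigr)
\]
by pushing every stray $\imj$ to the right. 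The four resulting terms separate cleanly: two carry no $\imj$ factor and two carry a single $\imj$, and the $\imj^2=-1$ produced by one cross term flips a sign so that the purely complex part is $A_{\mathrm{h}}(k,\ell)A_{\mathrm{h}}^{\ast}(k',\ell')+A_{\mathrm{v}}(k,\ell)A_{\mathrm{v}}^{\ast}(k',\ell')$ while the $\imj$-part is $[A_{\mathrm{v}}(k,\ell)A_{\mathrm{h}}(k',\ell')-A_{\mathrm{h}}(k,\ell)A_{\mathrm{v}}(k',\ell')]\imj$.

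Finally I would sum over $k$ and $\ell$. By the definition \eqref{eq:ACR_quatn}, the two complex sums are exactly $R_{\mathbf{A}_{\mathrm{h}}}(m,n)$ and $R_{\mathbf{A}_{\mathrm{v}}}(m,n)$, while by \eqref{eq:stardefn} the two conjugate-free sums are $(\mathbf{A}_{\mathrm{v}}\star\mathbf{A}_{\mathrm{h}})_{m,n}$ and $(\mathbf{A}_{\mathrm{h}}\star\mathbf{A}_{\mathrm{v}})_{m,n}$, which assembles into the claimed identity. I expect the only genuine pitfall to be bookkeeping of conjugations: because $A_{\mathrm{h}}^{\ast}$ sits inside the product, the commutation rule \emph{undoes} its conjugate, turning $A_{\mathrm{v}}(k,\ell)\,A_{\mathrm{h}}^{\ast}(k',\ell')$ into the conjugate-free product $A_{\mathrm{v}}(k,\ell)A_{\mathrm{h}}(k',\ell')$, so that this term lands on the $\star$ operator rather than on $\mathbf{R}$. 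Keeping straight which factors remain conjugated, and through which complex numbers $\imj$ has been pushed, is the main thing to watch.
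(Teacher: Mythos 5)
Your proof is correct and takes essentially the same approach as the paper's: both expand $\mathbf{R}_{\mathbf{A}}$ via the split $\mathbf{A}=\mathbf{A}_{\mathrm{h}}+\mathbf{A}_{\mathrm{v}}\imj$ and rest on the same key commutation fact $\imj z=z^{\ast}\imj$ for complex $z$ (the paper's appendix identity $x\imj y=xy^{\ast}\imj$) to sort the four resulting terms into the two autocorrelations plus the conjugate-free cross-correlation difference times $\imj$. The only cosmetic differences are that you pre-simplify the conjugate to $\mathbf{A}_{\mathrm{h}}^{\ast}-\mathbf{A}_{\mathrm{v}}\imj$ and work entrywise throughout, whereas the paper applies $(pq)^{\ast}=q^{\ast}p^{\ast}$ and manipulates the $\star$ expansion at the matrix level before dropping to entries for the cross terms.
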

 \begin{proof}
We use the complex decomposition in \eqref{eq:quat_mat_dec} to write $\mathbf{R}_{\mathbf{A}}=(\mathbf{A}_{\mathrm{h}}+\mathbf{A}_{\mathrm{v}}\imj) \star (\mathbf{A}_{\mathrm{h}}+\mathbf{A}_{\mathrm{v}}\imj)^{\ast}$. Using the distributive law and \eqref{eq:prod_conj}, the autocorrelation can be simplified to
\begin{equation}
\label{eq:RA_exp_1}
\mathbf{R}_{\mathbf{A}}=\mathbf{A}_{\mathrm{h}} \star \mathbf{A}^{\ast}_{\mathrm{h}} + \mathbf{A}_{\mathrm{v}}\imj \star \imj^{\ast} \mathbf{A}^{\ast}_{\mathrm{v}}+\mathbf{A}_{\mathrm{h}}\star\imj^{\ast}\mathbf{A}^{\ast}_{\mathrm{v}}+ \mathbf{A}_{\mathrm{v}}\imj \star  \mathbf{A}^{\ast}_{\mathrm{h}}.
\end{equation}
The first summand in \eqref{eq:RA_exp_1} is $\mathbf{R}_{\mathbf{A}_{\mathrm{h}}}$. The second summand in \eqref{eq:RA_exp_1} is simplified using \eqref{eq:stardefn} and the property that $\imj \imj^{\ast}=1$. The simplification results in $ \mathbf{A}_{\mathrm{v}}\imj \star \imj^{\ast} \mathbf{A}^{\ast}_{\mathrm{v}}= \mathbf{R}_{\mathbf{A}_{\mathrm{v}}}$. Therefore, $\mathbf{R}_{\mathbf{A}}=\mathbf{R}_{\mathbf{A}_{\mathrm{h}}}+\mathbf{R}_{\mathbf{A}_{\mathrm{v}}}+\mathbf{A}_{\mathrm{h}}\star\imj^{\ast}\mathbf{A}^{\ast}_{\mathrm{v}}+ \mathbf{A}_{\mathrm{v}}\imj \star  \mathbf{A}^{\ast}_{\mathrm{h}}$.
\par We now show that the sum of the third and the fourth summands in \eqref{eq:RA_exp_1} is $[\mathbf{A}_{\mathrm{v}}\star\mathbf{A}_{\mathrm{h}}- \mathbf{A}_{\mathrm{h}}\star\mathbf{A}_{\mathrm{v}}]\imj$. We define $\alpha_{m,n}$ as the $(m,n)^{\mathrm{th}}$ element of $\mathbf{A}_{\mathrm{h}}\star\imj^{\ast}\mathbf{A}^{\ast}_{\mathrm{v}}+ \mathbf{A}_{\mathrm{v}}\imj \star  \mathbf{A}_{\mathrm{h}}$. Then,
\begin{align}
\nonumber
\alpha_{m,n}=\sum^{M-1}_{k=0}\sum^{M-1}_{k=0} \big[&A_{\mathrm{h}}(k,\ell)\imj^{\ast}A_{\mathrm{v}}^{\ast}(\langle k+m \rangle_M,\langle \ell+n \rangle _N)\\
\label{eq:alpha_mn}
&+A_{\mathrm{v}}(k,\ell)\imj A_{\mathrm{h}}^{\ast}(\langle k+m \rangle_M,\langle \ell+n \rangle _N) \big].
\end{align} 
To simplify $\alpha_{m,n}$, we use the property that $x \imj y = x y^{\ast} \imj$ and $x \imj^{\ast} y = -x y^{\ast} \imj$ for $x \in \mathbb{C}$ and $y\in \mathbb{C}$ [Proof in the Appendix]. As the entries of $\mathbf{A}_{\mathrm{h}}$ and $\mathbf{A}_{\mathrm{v}}$ are elements in $\mathbb{C}$, this property can be used in \eqref{eq:alpha_mn} to %write
%\begin{align}
%\nonumber
%\alpha_{m,n}=\sum^{M-1}_{k=0}\sum^{M-1}_{k=0} \big[&-A_{\mathrm{h}}(k,\ell)A_{\mathrm{v}}(\langle k+m \rangle_M,\langle \ell+n \rangle _N)\\
%\label{eq:alpha_expan}
%&+A_{\mathrm{v}}(k,\ell)A_{\mathrm{h}}(\langle k+m \rangle_M,\langle \ell+n \rangle _N) \big] \imj.
%\end{align}
%It can be observed from \eqref{eq:stardefn} and \eqref{eq:alpha_expan} 
show that $\alpha_{m,n}$ is the $(m,n)^{\mathrm{th}}$ entry of $[\mathbf{A}_{\mathrm{v}}\star\mathbf{A}_{\mathrm{h}}- \mathbf{A}_{\mathrm{h}}\star\mathbf{A}_{\mathrm{v}}]\imj$.
 \end{proof}
 \begin{customthm}{1}\label{th:main}
The complex components $\mathbf{A}_{\mathrm{h}}$ and $\mathbf{A}_{\mathrm{v}}$ of a perfect quaternion array $\mathbf{A}$ form a PCP.
\end{customthm} 
\begin{proof}
When $\mathbf{A}$ is a perfect quaternion array, i.e., $\mathbf{R}_{\mathbf{A}}=MN \boldsymbol {\delta}$, the result in Lemma \ref{lem:decomp} leads to
\begin{equation}
\label{eq:equatePQA_zero}
\mathbf{R}_{\mathbf{A}_{\mathrm{h}}}+\mathbf{R}_{\mathbf{A}_{\mathrm{v}}}+ [\mathbf{A}_{\mathrm{v}}\star\mathbf{A}_{\mathrm{h}}- \mathbf{A}_{\mathrm{h}}\star\mathbf{A}_{\mathrm{v}}]\imj=MN \boldsymbol {\delta}.
\end{equation}
We interpret the quaternion matrix on the left hand side of \eqref{eq:equatePQA_zero} as a sum of $\mathbf{R}_{\mathbf{A}_{\mathrm{h}}}+\mathbf{R}_{\mathbf{A}_{\mathrm{v}}}$ and $[\mathbf{A}_{\mathrm{v}}\star\mathbf{A}_{\mathrm{h}}- \mathbf{A}_{\mathrm{h}}\star\mathbf{A}_{\mathrm{v}}]\imj$. As $\mathbf{A}_{\mathrm{h}}$ and $\mathbf{A}_{\mathrm{v}}$ are matrices in $\mathbb{C}^{M \times N}$, the first term $\mathbf{R}_{\mathbf{A}_{\mathrm{h}}}+\mathbf{R}_{\mathbf{A}_{\mathrm{v}}} \in \mathbb{C}^{M\times N}$ and does not have any $\imj$ and $\imk$ components. The second term, i.e., $[\mathbf{A}_{\mathrm{v}}\star\mathbf{A}_{\mathrm{h}}- \mathbf{A}_{\mathrm{h}}\star\mathbf{A}_{\mathrm{v}}]\imj$, is a multiplication of a matrix in $\mathbb{C}^{M \times N}$ with $\imj$. Such a matrix has zero real and zero $\imi$ components. The matrix on the right hand side, however, is purely real. Putting these observations together, it can be concluded that the equality in \eqref{eq:equatePQA_zero} holds only when 
\begin{align}
\label{eq:thmpcp}
\mathbf{R}_{\mathbf{A}_{\mathrm{h}}}+\mathbf{R}_{\mathbf{A}_{\mathrm{v}}}&=MN\boldsymbol{\delta},\, \mathrm{and}\\
\label{eq:corr_commut}
\mathbf{A}_{\mathrm{v}}\star\mathbf{A}_{\mathrm{h}}&= \mathbf{A}_{\mathrm{h}}\star\mathbf{A}_{\mathrm{v}}.
\end{align}
The result in Theorem \ref{th:main} follows from \eqref{eq:thmpcp}. 
\end{proof}
\par An important observation from Theorem \ref{th:main} is that the complex components of a PQA, i.e., $\mathbf{A}_{\mathrm{h}}$ and $\mathbf{A}_{\mathrm{v}}$, satisfy \eqref{eq:corr_commut} in addition to the PCP property in \eqref{eq:thmpcp}. Equivalently, our method results in PCPs which are commutative with conjugate-free periodic cross-correlation. We now explain an example to generate a PCP in $\mathbb{C}^{2 \times 2}$ from the PQA in \eqref{eq:example_PQA}. The matrix $\mathbf{D}$ in \eqref{eq:example_PQA} can be expressed as $\mathbf{D}_{\mathrm{h}}+\mathbf{D}_{\mathrm{v}} \imj$, where 
\begin{equation}
\label{eq:d_hv}
\mathbf{D}_{\mathrm{h}}=\begin{pmatrix}
1 & \imi \\
0 & 0  
\end{pmatrix} \, \mathrm{and} \, \,
\mathbf{D}_{\mathrm{v}}=\begin{pmatrix}
0 & 0 \\
1 & \imi  
\end{pmatrix}.
\end{equation}
It can be verified that $\mathbf{D}_{\mathrm{h}}$ and  $\mathbf{D}_{\mathrm{h}}$ satisfy the definition of a PCP in \eqref{eq:PCP_cmp}, equivalently \eqref{eq:thmpcp}. These matrices, however, contain entries which are not in the quaternary alphabet $\mathcal{C}$. In Section \ref{sec:construct}, we show how Theorem \ref{th:main} can still be used to construct quaternary PCPs from PQAs. 
\section{Construction of quaternary PCPs from PQAs} \label{sec:construct}
We define the basic unit quaternion alphabet as $\mathbb{H}=\{1,-1,\imi,-\imi, \imj, -\imj, \imk, -\imk\}$. PQAs with entries in $\mathbb{H}$ were constructed in \cite{pqa_1,pqa_2,pqa_3,pqa_4}. In this section, we show how to construct PCPs with entries in $\mathcal{C}$ from such PQAs. 
\par To generate quaternary PCPs, we first construct a matrix $\tilde{\mathbf{A}}=\mathbf{A}(1+\imj)$ where $\mathbf{A}$ is a PQA in $\mathbb{H}^{M \times N}$. The periodic autocorrelation of $\tilde{\mathbf{A}}$, i.e., $\mathbf{R}_{\tilde{\mathbf{A}}}$, is then $\mathbf{A}(1+\imj)\star (\mathbf{A}(1+\imj))^{\ast}$. The autocorrelation can be further simplified to $\mathbf{R}_{\tilde{\mathbf{A}}}=\mathbf{A}(1+\imj) \star (1+\imj)^{\ast} \mathbf{A}^{\ast}$ using \eqref{eq:prod_conj}. As $(1+\imj)(1+\imj)^{\ast}=2$, it can be shown that $\mathbf{R}_{\tilde{\mathbf{A}}}=2\mathbf{R}_{\mathbf{A}}$. Now, it follows from \eqref{eq:PQA_def} that $\tilde{\mathbf{A}}$ is a PQA whenever $\mathbf{A}$ is a PQA. 
\par We observe that the complex components of $\tilde{\mathbf{A}}$, i.e., $\tilde{\mathbf{A}}_{\mathrm{h}}$ and $\tilde{\mathbf{A}}_{\mathrm{v}}$ form a PCP using Theorem 1. These components can be expressed in terms of $\mathbf{A}_{\mathrm{h}}$ and $\mathbf{A}_{\mathrm{v}}$ as
\begin{align}
\nonumber
\tilde{\mathbf{A}}_{\mathrm{h}}&=\mathbf{A}_{\mathrm{h}}-\mathbf{A}_{\mathrm{v}}\,\, \mathrm{and}\\
\label{eq:tildes}
\tilde{\mathbf{A}}_{\mathrm{v}}&=\mathbf{A}_{\mathrm{h}}+\mathbf{A}_{\mathrm{v}}.
\end{align}
When $\mathbf{A}$ is a PQA in $\mathbb{H}^{M \times N}$, the entries of $\mathbf{A}_{\mathrm{h}}$ and $\mathbf{A}_{\mathrm{v}}$ are elements in $\mathcal{C} \cup \{0\}$. In addition, $\forall k, \ell$, $A_{\mathrm{h}}(k, \ell)=0$ whenever $A_{\mathrm{v}} (k,\ell)\neq0$ and vice versa. Putting these observations together, the entries of $\tilde{\mathbf{A}}_{\mathrm{h}}$ and $\tilde{\mathbf{A}}_{\mathrm{v}}$ are elements in $\{1, \imi, -1, -\imi\}$. Therefore, $\tilde{\mathbf{A}}_{\mathrm{h}}$ and $\tilde{\mathbf{A}}_{\mathrm{v}}$ form a quaternary PCP in $\mathcal{C}^{M \times N}$ whenever $\mathbf{A}$ is a PQA over $\mathbb{H}^{M \times N}$.
\par We now discuss an example of a $2\times 2$ quaternary PCP, and provide a list of quaternary PCPs that can be derived from PQAs using the proposed procedure. For the PQA in \eqref{eq:example_PQA}, the matrices $\tilde{\mathbf{D}}_{\mathrm{h}}={\mathbf{D}}_{\mathrm{h}}+{\mathbf{D}}_{\mathrm{v}}$ and $\tilde{\mathbf{D}}_{\mathrm{v}}={\mathbf{D}}_{\mathrm{h}}-{\mathbf{D}}_{\mathrm{v}}$ are
\begin{equation}
\label{eq:d_tild_hv}
\tilde{\mathbf{D}}_{\mathrm{h}}=\begin{pmatrix}
1 & \imi \\
1 & \imi  
\end{pmatrix} \, \mathrm{and} \, \,
\tilde{\mathbf{D}}_{\mathrm{v}}=\begin{pmatrix}
1 & \imi \\
-1 & -\imi  
\end{pmatrix}.
\end{equation}
The pair in \eqref{eq:d_tild_hv} forms a PCP over $\mathcal{C}$. Our procedure can also be used to transform one-dimensional or multi-dimensional PQAs into PCPs of the same size. For instance, one-dimensional quaternary PCPs with lengths $4,6,8,10,14,16,18,26,30,38,42,50,54,62,74,82,90$ and $98$ can be derived using the perfect quaterion sequences in \cite{pqa_2}.  Prior work has constructed $1\mathrm{D}$ PCPs of these lengths using techniques that do not use PQAs. Our PQA-based construction is useful in designing new multi-dimensional PCPs which have not been discovered in the literature. For instance, new quaternary PCP matrices of size $2^{n} \times 2^{n}$, and quaternary PCP tensors of size $2^{n} \times 2^{n} \times 2^{n} \times 2^{n}$ can be constructed for $2\leq n \leq 6$ from the PQAs in \cite{pqa_2}. The PQA constructions in \cite{pqa_2} are based on recursive algorithms or exhaustive search over a class of functions to generate such arrays. It is important to note that the periodic autocorrelation for the tensor case is multi-dimensional. For instance, the periodic autocorrelation of a 4D-array $\boldsymbol{\mathcal{A}}\in \mathbb{Q}^{M\times N \times S \times T}$ is defined as $\boldsymbol{\mathcal{R}}_{\mathcal{A}}$ where ${\mathcal{R}}_{\boldsymbol{\mathcal{A}}}(m,n,s,t)=\sum_{k,\ell,u,v}\mathcal{A}(k,\ell,u,v)\mathcal{A}^{\ast}(\langle k+m \rangle_M,\langle \ell+n \rangle_N, \langle u+s \rangle_S ,\langle v+t \rangle_T)$. Our procedure can also be used to decompose the PQAs in \cite{pqa_1}, \cite{pqa_3}, and \cite{pqa_4} into PCPs. 
\par Now, we focus on quaternary PCP matrices and study their complementary property using the 2D-discrete Fourier transform (2D-DFT). For $\mathbf{X} \in \mathbb{C}^{M \times N}$, we define $\mathcal{F}_{\mathrm{2D}}(\mathbf{X})$ as the 2D-DFT of $\mathbf{X}$. For example, $\mathcal{F}_{\mathrm{2D}}(\boldsymbol{\delta})=\mathbf{1}$. An interesting property of the 2D-DFT is that $\mathcal{F}_{\mathrm{2D}}(\mathbf{R}_{\mathbf{X}})=|\mathcal{F}_{\mathrm{2D}}(\mathbf{X})|^2$ \cite{2D_DFT}. For a PCP $\tilde{\mathbf{A}}_{\mathrm{h}},\tilde{\mathbf{A}}_{\mathrm{v}}\in \mathcal{C}^{M \times N}$, applying 2D-DFT on both sides of $\mathbf{R}_{\tilde{\mathbf{A}}_{\mathrm{h}}}+\mathbf{R}_{\tilde{\mathbf{A}}_{\mathrm{v}}}=2MN \boldsymbol{\delta}$ results in \cite{pcp_bomer}
\begin{equation}
\label{eq:fourier}
|\mathcal{F}_{\mathrm{2D}}(\tilde{\mathbf{A}}_{\mathrm{h}})|^2+|\mathcal{F}_{\mathrm{2D}}(\tilde{\mathbf{A}}_{\mathrm{v}})|^2 = 2MN \mathbf{1}.
\end{equation}
From a beamforming perspective, $|\mathcal{F}_{\mathrm{2D}}(\tilde{\mathbf{A}}_{\mathrm{h}})|^2$ is the power of the discrete beam pattern generated when $\tilde{\mathbf{A}}_{\mathrm{h}}$ is applied to a planar antenna array \cite{quasiomni}. When $\tilde{\mathbf{A}}_{\mathrm{h}}$ and $\tilde{\mathbf{A}}_{\mathrm{v}}$ are applied along the orthogonal polarizations of a dual polarized beamforming (DPBF) system, it can be observed from \eqref{eq:fourier} that the sum of the beam powers taken across both polarizations is constant at all the discrete beam pattern locations. As a result, PCPs result in quasi-omnidirectional beams when applied to DPBF systems. The quaternary nature of the PCPs derived in this paper allows their application to DPBF systems with just two-bit phase shifters. 
\begin{figure}[h!]
\centering
\subfloat[ $|\mathcal{F}_{\mathrm{2D}}(\tilde{\mathbf{A}}_{\mathrm{h}})|^2$]{\includegraphics[trim=1.5cm 6.5cm 2cm 7cm,clip=true,width=4.25cm, height=4.25cm]{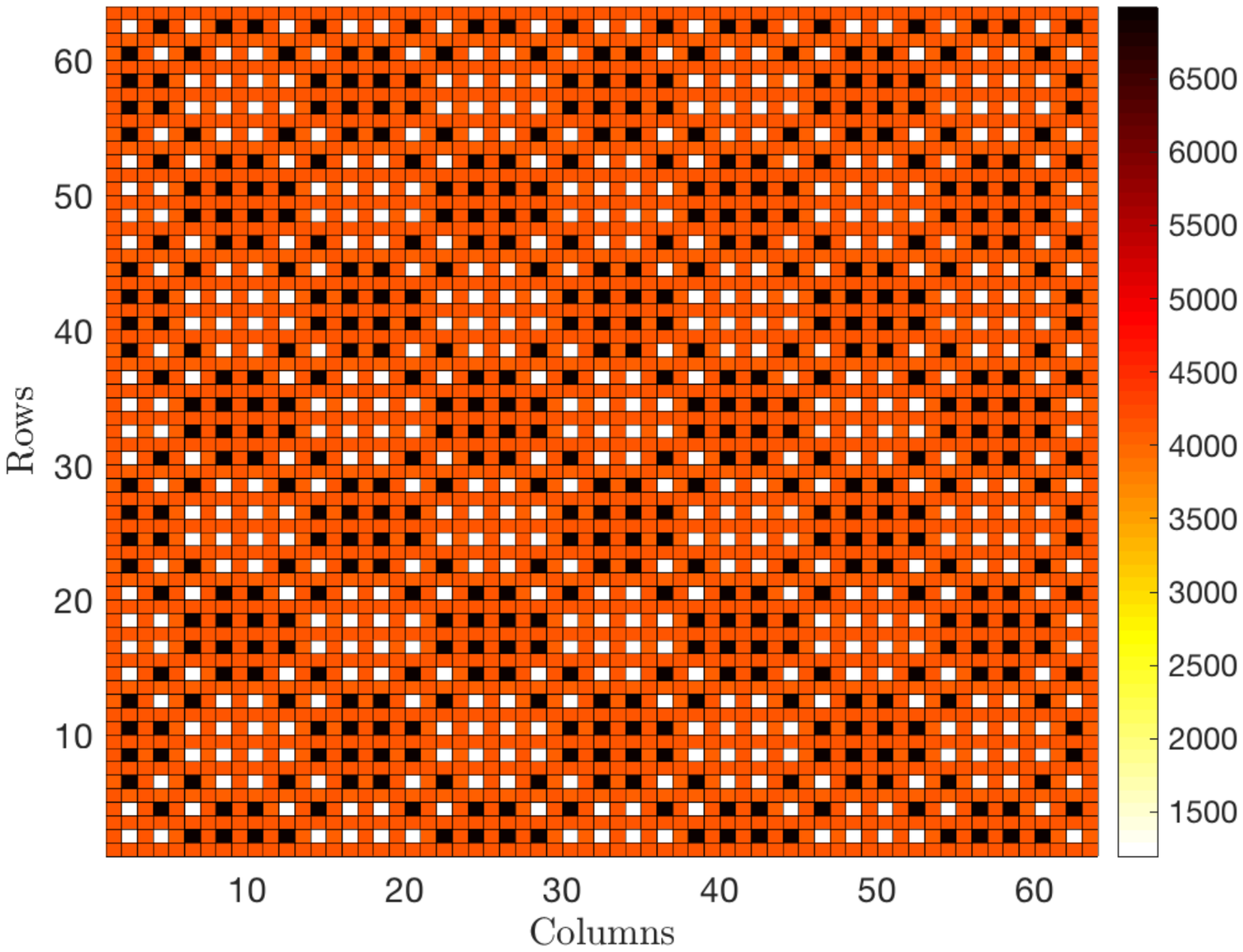}\label{fig:spec_h}}
\:\:\:
\subfloat[$|\mathcal{F}_{\mathrm{2D}}(\tilde{\mathbf{A}}_{\mathrm{v}})|^2$]{\includegraphics[trim=1.5cm 6.5cm 2cm 7cm,clip=true,width=4.25cm, height=4.25cm]{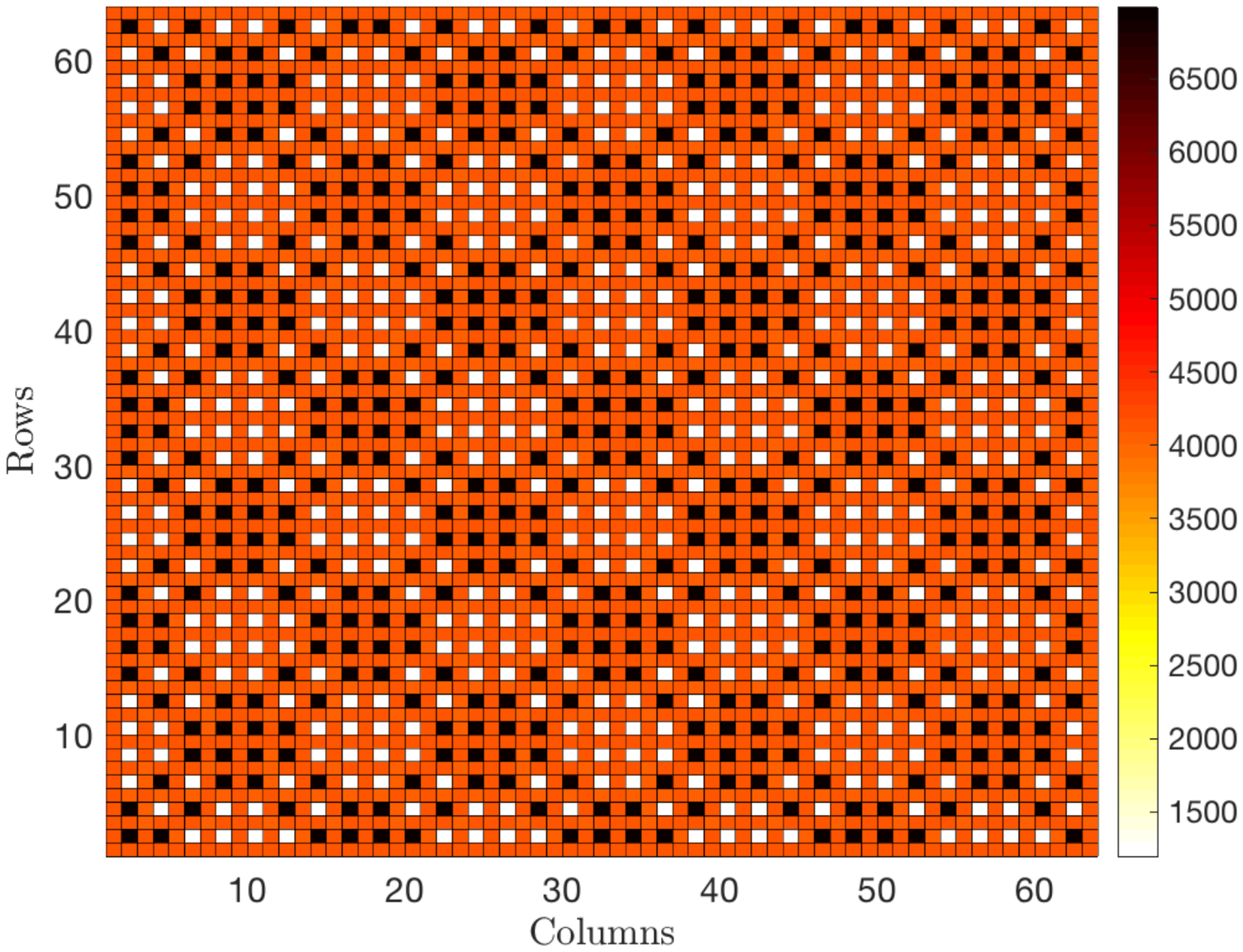}\label{fig:spec_v}}
\caption{The squared 2D-DFT magnitudes of the quaternary PCPs constructed with our method by exploiting the right periodic autocorrelation of the $64 \times 64$ PQA in \cite{pqa_2}. The sum of these matrices is $2\times 64^2$ at all the locations as per \eqref{eq:fourier}. \small \normalsize}
\end{figure}
\par For $4\times 4$, $8 \times 8$, $16 \times 16$, and $32 \times 32$ PCPs derived from the PQAs in \cite{pqa_2}, we observed that $|\mathcal{F}_{\mathrm{2D}}(\tilde{\mathbf{A}}_{\mathrm{h}})|^2=MN \mathbf{1}$. Equivalently, $\mathbf{R}_{\tilde{\mathbf{A}}_{\mathrm{h}}}=MN\boldsymbol{\delta}$. In such a case, the PCP is a collection of two perfect quaternary arrays \cite{pquat} as $\tilde{\mathbf{A}}_{\mathrm{h}}$ and $\tilde{\mathbf{A}}_{\mathrm{v}}$ have a perfect periodic autocorrelation. These arrays are different from the perfect quaternary arrays used in \cite{pqa_4} to construct PQAs. The construction in \cite{pqa_4} is based on an inflation technique which transforms a $M \times N$ perfect quaternary array into an $Md\times Nd$ PQA, where $d=MN-1$ is prime. Our method generates PCPs which have the same size of the underlying PQA and is not the same as the reverse construction in \cite{pqa_4}.
\par The proposed construction does not always result in PCPs which contain two perfect quaternary arrays. For example, it can be observed that $|\mathcal{F}_{2\mathrm{D}}({\tilde{\mathbf{D}}_{\mathrm{h}}})|^2/4\neq \mathbf{1}$ and $|\mathcal{F}_{2\mathrm{D}}({\tilde{\mathbf{D}}_{\mathrm{v}}})|^2/4 \neq \mathbf{1}$ for the PCP in \eqref{eq:d_tild_hv}. In this case, ${\tilde{\mathbf{D}}_{\mathrm{h}}}$ and ${\tilde{\mathbf{D}}_{\mathrm{v}}}$ are not perfect quaternary arrays although they form a PCP. Another example of a non-trivial PCP is one that is generated from a $64\times 64$ PQA. The squared 2D-DFT magnitudes of the $64 \times 64$ matrices in this PCP are shown in Fig. 1. As the 2D-DFT magnitudes in Fig. 1 vary across the entries, the matrices in this PCP are not perfect quaternary arrays.
\section{Construction of quaternary PCPs using the DFT and the left periodic autocorrelation}\label{sec:leftperiodic}
In this section, we use the properties of the DFT to obtain a class of quaternary PCPs from the PCPs constructed in Section \ref{sec:construct}. We also derive a new class of quaternary PCPs by exploiting the delta structure of the left periodic autocorrelation of PQAs.
\par We now discuss how the PCP property is preserved with unitary scalar multiplication, 2D-circulant shifting, conjugation, and flipping. To show this invariance, we first note that every matrix pair $\tilde{\mathbf{A}}_{\mathrm{h}}, \tilde{\mathbf{A}}_{\mathrm{v}} \in \mathcal{C}^{M\times N}$ which satisfies the 2D-DFT equation in \eqref{eq:fourier} is a PCP. From \eqref{eq:fourier}, we observe that $c_1 \tilde{\mathbf{A}}_{\mathrm{h}}, c_2 \tilde{\mathbf{A}}_{\mathrm{v}}$ is a PCP when $|c_1|=1$ and $|c_2|=1$. Next, we define an $(r,t)$ 2D-circulant shift of $\mathbf{A}$ as $\mathcal{S}_{r, t}(\mathbf{A})$. The matrix $\mathcal{S}_{r, t}(\mathbf{A})$ is obtained by circulantly shifting every column of $\mathbf{A}$ by $r$ units down, and circulantly shifting every row of the resultant by $t$ units to the right.  An interesting property of the 2D-DFT is that $|\mathcal{F}_{2\mathrm{D}}\left(\mathcal{S}_{r, t}(\mathbf{A})\right)|=|\mathcal{F}_{2\mathrm{D}}(\mathbf{A})|$. By using this equivalence in \eqref{eq:fourier}, it follows that $\mathcal{S}_{r_1, t_1}(\tilde{\mathbf{A}}_{\mathrm{h}}), \mathcal{S}_{r_2, t_2}(\tilde{\mathbf{A}}_{\mathrm{v}})$ form a PCP for any arbitrary integer pairs $(r_1,t_1)$ and $(r_2,t_2)$. Using the 2D-DFT properties \cite{image_process}, it can also be shown that the conjugates of the matrices in a PCP, i.e., $\tilde{\mathbf{A}}^{\ast}_{\mathrm{h}}$ and $ \tilde{\mathbf{A}}^{\ast}_{\mathrm{v}}$, form a PCP. Furthermore, the matrices $\tilde{\mathbf{A}}_{\mathrm{h},\mathrm{flip}}$ and $\tilde{\mathbf{A}}_{\mathrm{v},\mathrm{flip}}$ form a PCP as their 2D-DFT magnitudes are just the flipped versions of $|\mathcal{F}_{2\mathrm{D}}(\tilde{\mathbf{A}}_{\mathrm{h}})|$ and $|\mathcal{F}_{2\mathrm{D}}(\tilde{\mathbf{A}}_{\mathrm{v}})|$ \cite{image_process}. These invariance laws also hold for multi-dimensional PCPs by the multi-dimensional DFT properties.
\par We define the $2\mathrm{D}$-left periodic autocorrelation of $\mathbf{A}\in \mathbb{Q}^{M \times N}$ as an $M \times N$ matrix $\mathbf{L}_{\mathbf{A}}$ such that \cite{pqa_1}
\begin{equation}
L_{\mathbf{A}}(m,n)=\sum^{M-1}_{k=0} \sum^{N-1}_{\ell=0} A^{\ast}(\langle k+m \rangle_M,\langle \ell+n \rangle _N)A(k,\ell).
\end{equation}
The variables $k$ and $\ell$ in the summation can be replaced by $k-m$ and $\ell-n$ to write
\begin{equation}
\label{eq:left_exp}
L_{\mathbf{A}}(m,n)=\sum^{M-1}_{k=0} \sum^{N-1}_{\ell=0} A^{\ast}( k, \ell)A(\langle k-m \rangle_M,\langle \ell-n \rangle _N).
\end{equation}
When $\mathbf{A}$ is a PQA, its left periodic autocorrelation $\mathbf{L}_{\mathbf{A}}$ is a delta function, i.e., $L_{\mathbf{A}}(m,n)=0\, \forall \,(m,n)\neq (0,0)$ \cite{pqa_1}. From \eqref{eq:ACR_quatn} and \eqref{eq:left_exp}, we observe that $L_{\mathbf{A}}(m,n)=R_{\mathbf{A}^{\ast}}(-m,-n)$. We put these observations together to conclude that $R_{\mathbf{A}^{\ast}}(m,n)=0\, \forall \,(m,n)\neq (0,0)$. Equivalently, $\mathbf{A}^{\ast}$ is a PQA by the definition in \eqref{eq:PQA_def}. The delta structure of the left periodic autocorrelation of a PQA allows the use of our PCP construction in Section \ref{sec:construct} over the complex conjugate of a PQA.
\par Our PCP construction first decomposes a PQA into two complex components according to \eqref{eq:quat_mat_dec}. The PQA $\mathbf{A}^{\ast}$ can be expressed as $(\mathbf{A}_{\mathrm{h}}+\mathbf{A}_{\mathrm{v}}\imj )^{\ast}=\mathbf{A}^{\ast}_{\mathrm{h}}+\imj^{\ast}\mathbf{A}^{\ast}_{\mathrm{v}}$. The term $\imj^{\ast}\mathbf{A}^{\ast}_{\mathrm{v}}$ can be simplified to $-\mathbf{A}_{\mathrm{v}}\imj$ using the property that $x \imj^{\ast} y = -x y^{\ast} \imj$ for $x \in \mathbb{C}$ and $y\in \mathbb{C}$ [Proof in the Appendix]. As a result, $\mathbf{A}^{\ast}=\mathbf{A}^{\ast}_{\mathrm{h}}-\mathbf{A}_{\mathrm{v}}\imj$. Now, it can be shown using Theorem \ref{th:main} that $\mathbf{A}^{\ast}_{\mathrm{h}}$ and $-\mathbf{A}_{\mathrm{v}}$ form a PCP. Similar to the construction in \eqref{eq:tildes}, we derive PCP matrices $\hat{\mathbf{A}}_{\mathrm{h}}$ and $\hat{\mathbf{A}}_{\mathrm{v}}$. These matrices can be expressed as 
\begin{align}
\nonumber
\hat{\mathbf{A}}_{\mathrm{h}}&=\mathbf{A}^{\ast}_{\mathrm{h}}-\mathbf{A}_{\mathrm{v}}\,\, \mathrm{and}\\
\label{eq:hat_leftpcp}
\hat{\mathbf{A}}_{\mathrm{v}}&=\mathbf{A}^{\ast}_{\mathrm{h}}+\mathbf{A}_{\mathrm{v}}.
\end{align}
In summary, our method exploits the delta structure of the left periodic autocorrelation of a PQA matrix $\mathbf{A}$ to construct the PCP $\hat{\mathbf{A}}_{\mathrm{h}},\hat{\mathbf{A}}_{\mathrm{v}}$. In Fig. 2, we show the squared 2D-DFT magnitudes of $\hat{\mathbf{A}}_{\mathrm{h}}$ and $\hat{\mathbf{A}}_{\mathrm{v}}$ corresponding to the $64 \times 64$ PQA in \cite{pqa_2}.  We claim that such a PCP cannot be derived from $\tilde{\mathbf{A}}_{\mathrm{h}},\tilde{\mathbf{A}}_{\mathrm{v}}$ by applying transformations such as unitary scalar multiplication, 2D-circulant shifting, conjugation, and flipping. These transformations either preserve or flip the 2D-DFT magnitude of a matrix, which is not the case with the PCPs shown in Fig. 1 and Fig. 2. Therefore, our method allows decomposing any PQA in $\mathbb{H}^{M \times N}$ into two distinct pairs of quaternary PCPs that are defined by \eqref{eq:tildes} and \eqref{eq:hat_leftpcp}.
\begin{figure}[h!]
\centering
\subfloat[ $|\mathcal{F}_{\mathrm{2D}}(\hat{\mathbf{A}}_{\mathrm{h}})|^2$]{\includegraphics[trim=1.5cm 6.5cm 2cm 7cm,clip=true,width=4.25cm, height=4.25cm]{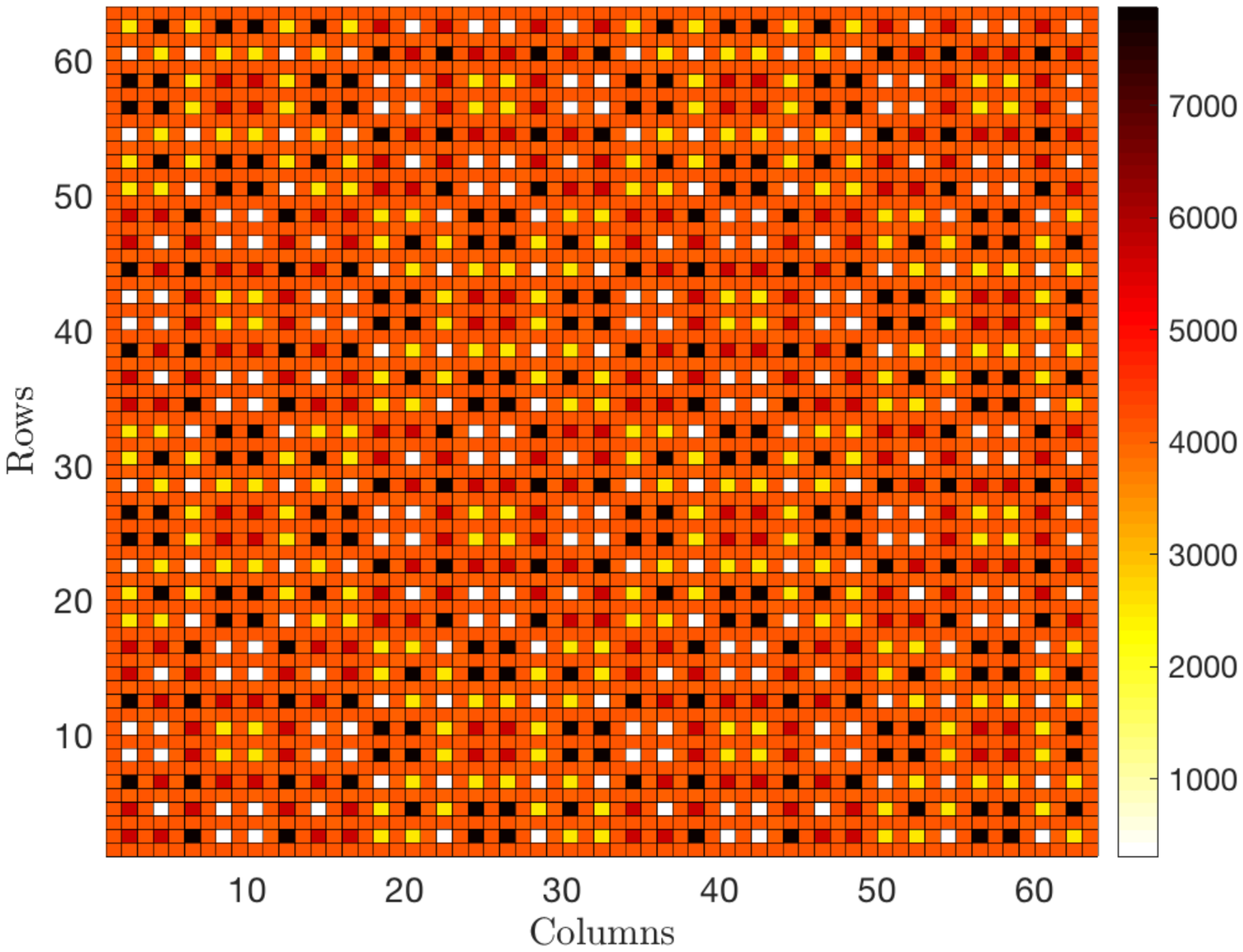}\label{fig:spec_h_hat}}
\:\:\:
\subfloat[$|\mathcal{F}_{\mathrm{2D}}(\hat{\mathbf{A}}_{\mathrm{v}})|^2$]{\includegraphics[trim=1.5cm 6.5cm 2cm 7cm,clip=true,width=4.25cm, height=4.25cm]{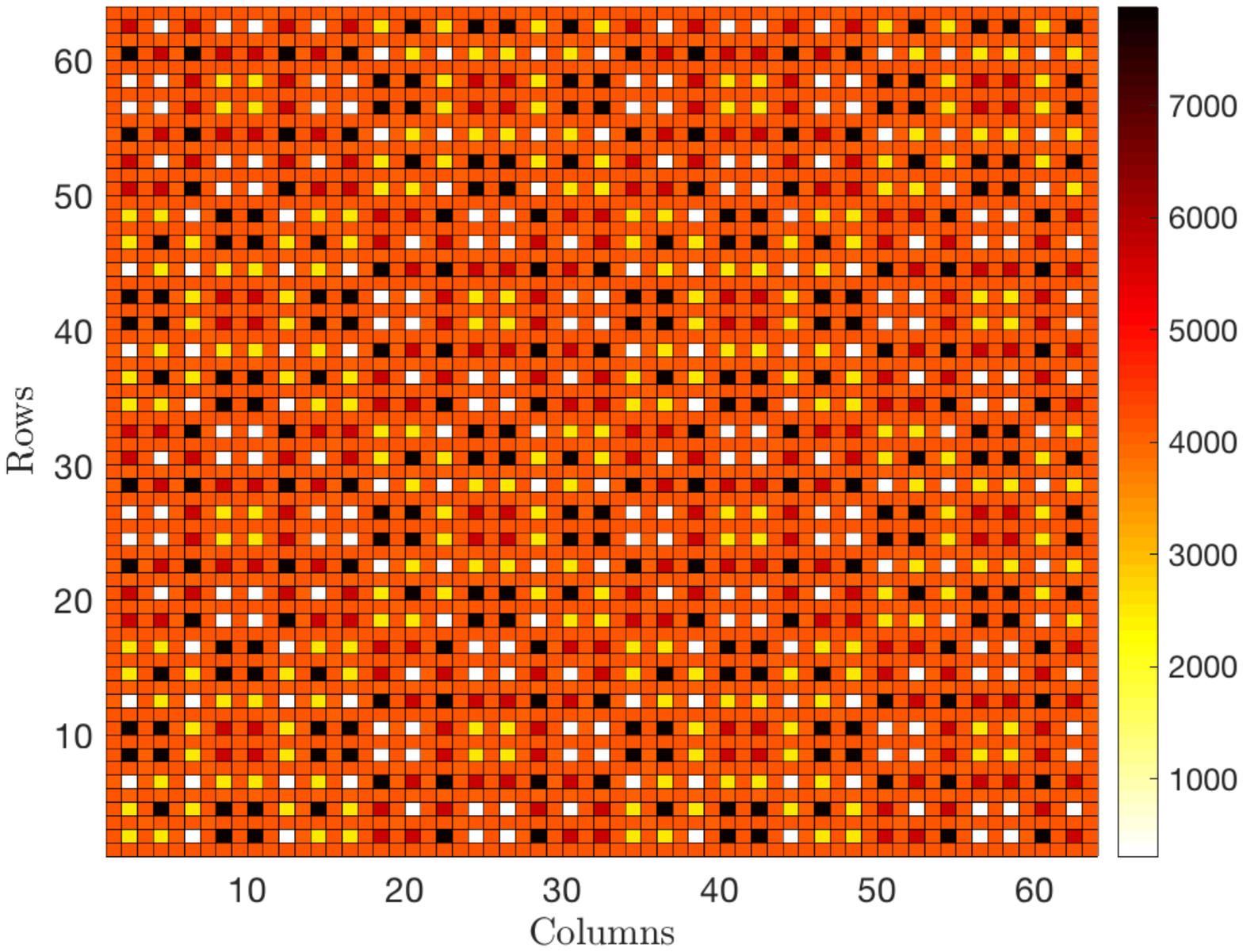}\label{fig:spec_v_hat}}
\caption{ The squared 2D-DFT magnitudes of the quaternary PCPs constructed with our method by exploiting the left periodic autocorrelation of the $64 \times 64$ PQA in \cite{pqa_2}. The sum of these matrices is $2\times 64^2$ at all the locations as per \eqref{eq:fourier}. \small \normalsize}
\end{figure}
\par We would like to mention that the main focus of this paper is on constructing PCPs from PQAs. An interesting question that arises is if PQAs can be derived from PCPs using the reverse of the proposed construction. To answer this question, we consider a PCP $\mathbf{B}_{\mathrm{h}}, \mathbf{B}_{\mathrm{v}} \in \mathbb{C}^{M\times N}$ such that $\Vert \mathbf{B}_{\mathrm{h}} \Vert_{\mathrm{F}}= \sqrt{MN}$ and $\Vert \mathbf{B}_{\mathrm{v}} \Vert_{\mathrm{F}}=\sqrt{MN}$. By definition, $\mathbf{R}_{\mathbf{B}_{\mathrm{h}}}+\mathbf{R}_{\mathbf{B}_{\mathrm{v}}}=2MN \boldsymbol{\delta}$. From Lemma \ref{lem:decomp}, it can be concluded that the quaternion matrix $\mathbf{B}=\mathbf{B}_{\mathrm{h}}+\mathbf{B}_{\mathrm{v}}\imj$ is perfect when $ \mathbf{B}_{\mathrm{h}} \star  \mathbf{B}_{\mathrm{v}}=\mathbf{B}_{\mathrm{v}} \star  \mathbf{B}_{\mathrm{h}}$. Therefore, PCPs which are commutative with conjugate-free cross correlation can be used to construct PQAs.
\par Now, we consider the right periodic autocorrelation and identify sufficient conditions to construct PQAs in $\mathbb{H}^{M \times N}$ from quaternary PCPs. For a quaternary PCP $\mathbf{B}_{\mathrm{h}}, \mathbf{B}_{\mathrm{v}} \in \mathcal{C}^{M\times N}$, we define $\mathbf{A}_{\mathrm{h}}=(\mathbf{B}_{\mathrm{h}}+\mathbf{B}_{\mathrm{v}})/2$ and $\mathbf{A}_{\mathrm{v}}=(\mathbf{B}_{\mathrm{h}}-\mathbf{B}_{\mathrm{v}})/2$. Now, it can be shown that $\mathbf{A}=\mathbf{A}_{\mathrm{h}}+\mathbf{A}_{\mathrm{v}}\imj$  is a PQA when $ \mathbf{B}_{\mathrm{h}} \star  \mathbf{B}_{\mathrm{v}}=\mathbf{B}_{\mathrm{v}} \star  \mathbf{B}_{\mathrm{h}}$. Furthermore, all the entries of $\mathbf{A}$ are elements in $\mathbb{H}$ only when $A_{\mathrm{h}}(k,\ell)A_{\mathrm{v}}(k,\ell)=0$ for every $k, \ell$. This condition translates to $B_{\mathrm{h}}(k,\ell)=\pm B_{\mathrm{v}}(k,\ell)$ for every $k, \ell$. In conclusion, the reverse of our construction allows mapping a quaternary PCP $\mathbf{B}_{\mathrm{h}}, \mathbf{B}_{\mathrm{v}}$ to a PQA in $\mathbb{H}^{M \times N}$ if the PCP satisfies the following properties:
\begin{align*}
(a)&  \,\, \mathbf{R}_{\mathbf{B}_{\mathrm{h}}}+\mathbf{R}_{\mathbf{B}_{\mathrm{v}}}=2MN \boldsymbol{\delta},\,\,  \\
(b)& \,\, \mathbf{B}_{\mathrm{h}} \star \mathbf{B}_{\mathrm{v}}= \mathbf{B}_{\mathrm{v}} \star \mathbf{B}_{\mathrm{h}}, \, \mathrm{and}\\
(c)& \,\, B_{\mathrm{h}}(k,\ell)=\pm B_{\mathrm{v}}(k,\ell)\,\, \forall k, \ell.
\end{align*}
To the best of our knowledge, the conditions $(a)-(c)$ have not been presented in prior work. Although $(a)-(c)$ are derived by exploiting the right periodic autocorrelation of a PQA, it can be shown that use of left periodic autocorrelation also results in the same set of conditions. These conditions are only sufficient and not necessary as there may be other methods to construct PQAs from PCPs. We believe that the conditions in $(a)-(c)$ can provide new insights into constructing PQAs over $\mathbb{H}$. 
\section{Conclusions and future work}
In this paper, we established a connection between perfect arrays over quaternions and periodic complementary arrays over complex numbers. We also demonstrated how perfect quaternion arrays can be transformed to quaternary periodic complementary pairs. Finally, we identified sufficient conditions to construct perfect quaternion arrays over the basic unit quaternions from quaternary periodic complementary pairs. In future work, we will study the use of perfect quaternion arrays for beamforming in low resolution phased arrays. 
\balance
\section*{Appendix}
We first prove that $x\imj y=xy^{\ast} \imj$ when $x,y \in \mathbb{C}$. Using the representation in \eqref{eq:quat_split}, $x\imj y$ can be written as 
\begin{align}
\nonumber
x\imj y&=(x_1+x_2 \imi) \imj (y_1+y_2 \imi)\\
\nonumber
&=(x_1+x_2 \imi) (y_1\imj -y_2 \imk)\\
\nonumber
&=(x_1 y_1 + x_2 y_2) \imj +(x_2 y_1-x_1 y_2) \imk \\
\label{eq:xjy}
&=[(x_1 y_1 + x_2 y_2) +(x_2 y_1-x_1 y_2) \imi] \imj. 
\end{align}
It can be observed that the right hand side of \eqref{eq:xjy} is $xy^{\ast} \imj$. Using $\imj^{\ast}=-\imj$ and the result in \eqref{eq:xjy}, it can be shown that $x \imj ^{\ast}y=-xy^{\ast} \imj$ for any $x, y \in \mathbb{C}$.
\bibliographystyle{IEEEtran}
\bibliography{refs}

% Generated by IEEEtran.bst, version: 1.14 (2015/08/26)
\begin{thebibliography}{10}
\providecommand{\url}[1]{#1}
\csname url@samestyle\endcsname
\providecommand{\newblock}{\relax}
\providecommand{\bibinfo}[2]{#2}
\providecommand{\BIBentrySTDinterwordspacing}{\spaceskip=0pt\relax}
\providecommand{\BIBentryALTinterwordstretchfactor}{4}
\providecommand{\BIBentryALTinterwordspacing}{\spaceskip=\fontdimen2\font plus
\BIBentryALTinterwordstretchfactor\fontdimen3\font minus
  \fontdimen4\font\relax}
\providecommand{\BIBforeignlanguage}[2]{{%
\expandafter\ifx\csname l@#1\endcsname\relax
\typeout{** WARNING: IEEEtran.bst: No hyphenation pattern has been}%
\typeout{** loaded for the language `#1'. Using the pattern for}%
\typeout{** the default language instead.}%
\else
\language=\csname l@#1\endcsname
\fi
#2}}
\providecommand{\BIBdecl}{\relax}
\BIBdecl

\bibitem{codedapert}
E.~E. Fenimore and T.~M. Cannon, ``Coded aperture imaging with uniformly
  redundant arrays,'' \emph{OSA Appl. optics}, vol.~17, no.~3, pp. 337--347,
  1978.

\bibitem{commun_seq}
P.~Fan and M.~Darnell, \emph{Sequence design for communications
  applications}.\hskip 1em plus 0.5em minus 0.4em\relax Wiley, 1996, vol.~1.

\bibitem{radar_app}
N.~Levanon, I.~Cohen, and P.~Itkin, ``Complementary pair radar
  waveforms--evaluating and mitigating some drawbacks,'' \emph{IEEE Aero. and
  Electronic Sys. Mag.}, vol.~32, no.~3, pp. 40--50, 2017.

\bibitem{golay_seq}
M.~Golay, ``Complementary series,'' \emph{IRE Trans. on Inform. Theory},
  vol.~7, no.~2, pp. 82--87, 1961.

\bibitem{pcp_bomer}
L.~Bomer and M.~Antweiler, ``Periodic complementary binary sequences,''
  \emph{IEEE Trans. on Inform. Theory}, vol.~36, no.~6, pp. 1487--1494, 1990.

\bibitem{pcp_dokovic}
D.~{\v{Z}}. {\DH}okovi{\'c} and I.~S. Kotsireas, ``Some new periodic
  \text{Golay} pairs,'' \emph{Springer Numerical Algorithms}, vol.~69, no.~3,
  pp. 523--530, 2015.

\bibitem{pcp_twoval}
X.~Li, Z.~Liu, Y.~L. Guan, and P.~Fan, ``Two-valued periodic complementary
  sequences,'' \emph{IEEE Signal Process. Lett.}, vol.~24, no.~9, pp.
  1270--1274, 2017.

\bibitem{pcp_2D}
F.~Zeng and Z.~Zhang, ``Two dimensional periodic complementary array sets,'' in
  \emph{IEEE Intl. Conf. on Wireless Commun. Networking and Mobile Comput.
  (WiCOM)}, 2010, pp. 1--4.

\bibitem{no_exist_18}
C.~Yang, ``Maximal binary matrices and sum of two squares,'' \emph{Math. of
  Comput.}, vol.~30, no. 133, pp. 148--153, 1976.

\bibitem{qpcp_1}
Z.~Zhou, J.~Li, Y.~Yang, and S.~Hu, ``Two constructions of quaternary periodic
  complementary pairs,'' \emph{IEEE Commun. Lett.}, vol.~22, no.~12, pp.
  2507--2510, 2018.

\bibitem{qpcp_2}
F.~Zeng, X.~Zeng, Z.~Zhang, and G.~Xuan, ``Quaternary periodic
  complementary/z-complementary sequence sets based on interleaving technique
  and gray mapping.'' \emph{Adv. in Math. of Comm.}, vol.~6, no.~2, pp.
  237--247, 2012.

\bibitem{qpcp_3}
J.-W. Jang, Y.-S. Kim, S.-H. Kim, and D.-W. Lim, ``New construction methods of
  quaternary periodic complementary sequence sets,'' \emph{Adv. in Math. of
  Comm.}, vol.~4, no.~1, p.~61, 2010.

\bibitem{petersson2019power}
S.~O. Petersson, ``Power-efficient beam pattern synthesis via dual polarization
  beamforming,'' \emph{arXiv preprint arXiv:1910.10015}, 2019.

\bibitem{pqa_1}
O.~Kuznetsov, ``Perfect sequences over the real quaternions,'' in \emph{IEEE
  Intl. Work. on Signal Des. and its Appl. in Commun.}, 2009, pp. 8--11.

\bibitem{q_algebra_1}
W.~R. Hamilton, \emph{Elements of quaternions}.\hskip 1em plus 0.5em minus
  0.4em\relax Longmans, Green, and Company, 1899, vol.~1.

\bibitem{pqa_2}
S.~Blake, ``Perfect sequences and arrays over the unit quaternions,''
  \emph{arXiv preprint arXiv:1701.01154}, 2016.

\bibitem{pqa_3}
C.~Bright, I.~Kotsireas, and V.~Ganesh, ``New infinite families of perfect
  quaternion sequences and \text{Williamson} sequences,'' \emph{arXiv preprint
  arXiv:1905.00267}, 2019.

\bibitem{pqa_4}
S.~B. Acevedo and N.~Jolly, ``Perfect arrays of unbounded sizes over the basic
  quaternions,'' \emph{Springer Crypt. and Commun.}, vol.~6, no.~1, pp. 47--57,
  2014.

\bibitem{PCP_git}
N.~J. Myers, ``Quaternary periodic complementary pairs,''
  \url{https://github.com/nitinjmyers}, 2020.

\bibitem{2D_DFT}
H.~D. Luke, ``Sequences and arrays with perfect periodic correlation,''
  \emph{IEEE Trans. on Aero. and Elec. Sys.}, vol.~24, no.~3, pp. 287--294,
  1988.

\bibitem{quasiomni}
X.~Meng, X.~Gao, and X.-G. Xia, ``Omnidirectional precoding based transmission
  in massive \text{MIMO} systems,'' \emph{IEEE Trans. on Commun.}, vol.~64,
  no.~1, pp. 174--186, 2015.

\bibitem{pquat}
K.~Arasu and W.~de~Launey, ``Two-dimensional perfect quaternary arrays,''
  \emph{IEEE Trans. on Inform. Theory}, vol.~47, no.~4, pp. 1482--1493, 2001.

\bibitem{image_process}
A.~K. Jain, \emph{Fundamentals of digital image processing}.\hskip 1em plus
  0.5em minus 0.4em\relax Prentice Hall, 1989.

\end{thebibliography}
\end{document}